\documentclass[11pt,a4paper]{article}

\usepackage[utf8]{inputenc}
\usepackage[T1]{fontenc}
\usepackage{lmodern}
\usepackage{amsmath,amssymb,amsthm}
\usepackage{booktabs}
\usepackage{graphicx}
\usepackage{caption}
\usepackage{float}
\usepackage[left=1in,right=1in,top=1in,bottom=1in]{geometry}
\usepackage{hyperref}
\usepackage{tcolorbox}

\usepackage[round]{natbib}

\newtheorem{proposition}{Proposition}

\title{Bayesian Imputation for Unplayed Games in Round-Robin Chess Tournaments: Application to Grand Chess Tour, Bucharest 2026}

\author{Ravi Varadhan \\ Johns Hopkins University }

\date{\today}

\begin{document}
	
	\maketitle
	
\begin{abstract}
When a player withdraws mid-tournament from a round-robin chess event, organizers face a 
fundamental problem: how should scores be assigned for games that were never played? Current 
FIDE guidelines specify annulment if withdrawal occurs before 50\% of games are completed, 
and forfeit (awarding unplayed opponents a full point) thereafter. This dichotomous rule 
creates arbitrary discontinuities and can substantially distort final standings. We develop 
a Bayesian framework based on best linear unbiased prediction (BLUP) that optimally combines pre-tournament ratings with observed performance, 
producing imputed scores that reflect both the withdrawn player's current form and the 
strength differentials among unplayed opponents. The estimator is shown to be consistent, point‑conserving, and to minimize mean squared error among linear unbiased predictors. A Monte Carlo simulation study using leave-one-out cross-validation 
on 180,000 simulated tournaments (18 scenarios $\times$ 10000 tournaments per scenario) demonstrates that Bayesian BLUP imputation reduces prediction 
error by 26\% overall compared to FIDE's current dichotomous rule, with improvements of 
41\% over forfeit and 12\% over annulment when each rule applies. The largest gains occur 
when the withdrawn player is underperforming, which is the most common withdrawal scenario. Notably, 
Bayesian imputation achieves the lowest prediction error in all 18 experimental scenarios 
when compared against both FIDE rules. We further show that even if FIDE were to adopt a 
simpler uniform rule, annulment would be strictly preferable to the current dichotomous 
system: annulment achieves 15--45\% lower RMSE than forfeit across all scenarios. The 
methodology is applied to GM Alireza Firouzja's withdrawal at Grand Chess Tour, Bucharest 2026, 
where Bayesian imputation would have awarded unplayed opponents 0.55--0.70 points rather 
than the 1.0 awarded under forfeit rules, preserving the competitive advantage of players 
who actually defeated Firouzja. An open-source R Shiny application implementing the 
methodology is provided for tournament organizers. We recommend that FIDE adopt Bayesian imputation for all World Championship cycle events, or at minimum replace the 
current dichotomous rule with uniform annulment.
	
		\vspace{0.3cm}
		\noindent\textbf{Keywords:} Chess tournaments, missing data imputation, Bayesian estimation, best linear unbiased prediction (BLUP), round-robin fairness, Candidates tournament, sports analytics
	\end{abstract}
	
	\section{Introduction} \label{sec:intro}
	
	Round-robin tournaments represent the gold standard of competitive chess formats. By ensuring that every participant faces every other participant under identical conditions, the round-robin structure eliminates the vagaries of pairing luck inherent in Swiss-system and knockout formats, providing a mathematically elegant measure of relative strength across the field. This format carries particular prestige at the super-GM level, where elite players with ratings above 2700 compete for substantial prize funds and critical qualification spots in world championship cycles and invitational series such as the Grand Chess Tour.
	
	Yet this elegant structure contains an inherent vulnerability: the withdrawal of a single player after partial completion. When illness, emergency, or other circumstances force a mid-tournament exit, organizers must decide how to handle the unplayed games. This decision carries consequences far beyond mere administration. The chosen imputation method directly affects final standings, prize distributions, and qualification outcomes. In elite fields where half a point frequently separates adjacent places, the mathematical treatment of unplayed games can prove decisive.
	
	The stakes escalate dramatically in events that determine World Championship qualification. The Candidates tournament, held every two years to select the challenger for the World Chess Championship, represents the pinnacle of competitive chess outside the title match itself. A mid-tournament withdrawal in such an event would create unprecedented controversy if handled through arbitrary forfeit rules or complete annulment. The methodology developed in this paper provides a principled framework for addressing such scenarios.
	
	The core challenge may be stated precisely: how should we assign scores to games that were never played? A satisfactory solution must preserve the competitive advantage legitimately earned by players who have already faced the withdrawn competitor, while avoiding arbitrary inflation or deflation of scores for players yet to face them. The method must respect both actual tournament performance and pre-existing strength indicators, and it must be transparent and mathematically defensible to all stakeholders.
	
The FIDE Handbook provides guidance on withdrawal procedures:
	The FIDE General Regulations for Competitions specify a 50\% threshold for handling 
	withdrawals in round-robin events \citep{fide_grc} :
	
\begin{tcolorbox}[colback=gray!10, colframe=gray!50, title={FIDE General Regulations, Article 6.6}]
6.6 Round robins

(1) Each player has entered into a contract to play throughout the tournament.

(2) When a player withdraws or is expelled from a tournament, the effect shall be as follows:

\hspace{1em}1. If a player has completed less than 50\% of his games, the results shall remain in the 
tournament table (for rating and historical purposes), but they shall not be counted in the 
final standings. The unplayed games of the player are indicated by (-) in the tournament 
table and those of his opponents by (+).

\hspace{1em}2. If a player has completed at least 50\% of his games, the results shall remain in the 
tournament table and shall be counted in the final standings. The unplayed games of the 
player are shown as above.
\end{tcolorbox}
	
	In practice, the (+) notation awards unplayed opponents a full point, making this a forfeit. This dichotomous rule, while administratively simple, creates arbitrary discontinuities that can substantially distort final standings.
	
	From a statistical standpoint, mid-tournament withdrawal creates a missing data problem, a topic extensively studied in the statistical literature \citep{little2019}. We observe actual results for games that were played, but must impute scores for games that were not. The quality of any imputation depends critically on the information available. Pre-tournament Elo ratings \citep{elo1978} provide prior expectations of performance based on historical data accumulated over many games. Tournament-specific results against the withdrawn player reveal their current form, which may differ substantially from rating-based expectations due to preparation, health, psychological factors, or simple variance. The number of games played before withdrawal determines the reliability of this tournament-specific information.
	
	A principled imputation strategy should combine these information sources optimally, weighting them according to their reliability. When a player withdraws after only one or two games, we have little tournament-specific information and should rely more heavily on rating-based expectations. When withdrawal occurs late in the tournament, after many games have been played, the observed performance provides a more reliable guide than historical ratings. This is precisely the logic underlying Bayesian shrinkage estimators \citep{efron1977}, which we propose as a solution to the withdrawal imputation problem.
	
	The Bayesian approach to combining prior information with observed data has a long and successful history in statistics \citep{gelman2013}. In the context of paired comparison models for chess and other competitive games, Glickman has developed sophisticated rating systems that account for uncertainty in player strength estimates \citep{glickman1999, glickman2025}. Our contribution extends this line of work to the specific problem of imputing scores for unplayed games following withdrawal.
	
	This paper makes significant contributions to the literature on tournament design and sports analytics. First, we formalize the fairness problem mathematically and provide a systematic review of existing imputation methods used in chess tournaments, analyzing their properties and limitations. Second, we develop a formal Bayesian framework based on a random effects model yielding the Best Linear Unbiased Predictor (BLUP), and we establish the theoretical properties of the resulting estimator. Third, we provide a complete worked example using the Grand Chess Tour Bucharest 2026, where Alireza Firouzja withdrew after five of nine rounds. Fourth, we provide an open-source R Shiny application enabling tournament organizers to implement these methods transparently and consistently. We further argue that adoption of principled imputation methods is especially urgent for high-stakes events such as the Candidates tournament, where the consequences of arbitrary decisions extend to World Championship qualification.
	
	\section{The Bucharest 2026 Case Study} \label{sec:buch}
	
	The Grand Chess Tour Bucharest leg, held in May 2026, provides an ideal case study for examining withdrawal imputation methods. The tournament featured a ten-player single round-robin with classical time controls, meaning each player was scheduled to play nine games over the course of the event. The field included world-class players with FIDE ratings ranging from 2655 to 2793, yielding an average rating of 2747 that placed the event firmly in super-GM territory. With substantial prize funds and Grand Chess Tour qualification points at stake, fair handling of any disruption carried significant competitive and financial implications.
	
	After completing five of his nine scheduled games, Alireza Firouzja withdrew from the tournament due to health reasons. At the time of withdrawal, Firouzja held a FIDE rating of 2759, making him one of the stronger players in the field. His results prior to withdrawal told a story of significant underperformance: he drew with Praggnanandhaa (rated 2741) and Sindarov (rated 2745), but lost to Caruana (rated 2793), Vachier-Lagrave (rated 2717), and Giri (rated 2753). This yielded a total score of 1.0 point from 5 games, corresponding to a scoring rate of 0.200---dramatically below the approximately 0.500 that his rating would predict against this field based on the Elo expected score formula \citep{elo1978}.
	
	Four players had not yet faced Firouzja at the time of his withdrawal: Keymer (rated 2762), So (rated 2754), Van Foreest (rated 2736), and Deac (rated 2655). The tournament organizers faced the question of how to credit these players for their unplayed games. Their decision was administratively simple: award 1.0 point to each unplayed opponent while retaining actual results for games already played.
	
	While straightforward to implement, this decision introduced systematic bias into the final standings. As we demonstrate in subsequent sections, the 1.0-point award overcompensated the affected players by approximately 0.3 to 0.4 points each, relative to statistically principled alternatives. In a competitive field where final standings often separate by half a point or less, this margin proves sufficient to alter finishing positions and affect prize distribution.
	
\subsection{Historical Precedents for Mid-Tournament Withdrawals}

The Bucharest 2026 withdrawal, while disruptive, is far from unprecedented. 
Elite chess tournaments have experienced mid-event withdrawals due to injury, 
illness, personal protests, and behavioral incidents, each highlighting the 
inadequacy of current imputation methods.

\textbf{USSR Championship 1983.} Legendary former World Champion Mikhail Tal 
withdrew due to high blood pressure after Round 9 (2 losses, 3 draws, and 
4 adjourned games). Under the tournament rules then in effect, all his 
games were subsequently annulled. This historical example shows that full 
annulment has been applied in prestigious events, despite its evident 
unfairness to players who had legitimately defeated Tal before his withdrawal.

\textbf{Sinquefield Cup 2022.} World Champion Magnus Carlsen withdrew from 
this elite Grand Chess Tour event after Round 3 following a loss to Hans 
Niemann. The withdrawal, which sparked one of the most significant 
controversies in modern chess history, left organizers scrambling to handle 
the unplayed games. All his results were annulled, including the games he 
had already played.

\textbf{US Chess Championship 2024.} Seventeen-year-old Grandmaster 
Christopher Yoo was expelled mid-tournament after a behavioral incident 
following his Round 5 loss. He left the playing area and struck a female 
videographer from behind, resulting in police involvement and immediate 
expulsion. This case illustrates that withdrawals may be 
involuntary---imposed by organizers for disciplinary reasons. All his 
results were annulled.

These cases span voluntary withdrawal (Carlsen), medical withdrawal (Tal, 
Firouzja), and disciplinary expulsion (Yoo). They involve the reigning 
World Champion, a former World Champion, elite Grandmasters, and rising 
stars. The common thread is that current methods---whether forfeit or 
annulment---fail to preserve competitive fairness. Notably, in all cases 
except Firouzja's withdrawal in Bucharest 2026, the results were fully 
annulled. A principled imputation framework is urgently needed.

	\section{Classical Imputation Methods} \label{sec:cim}
	
	Before developing our Bayesian framework, we review the imputation methods currently in use and analyze their properties and limitations. This analysis draws on the broader statistical literature on missing data \citep{little2019} while focusing on the specific features of chess tournament contexts.
	
\subsection{Notation}
\label{sec:notation}

Throughout this paper, we adopt the following notation:

\paragraph{Players and Ratings.}
\begin{itemize}
	\item $W$: The withdrawn player
	\item $i$: Index for any remaining (non-withdrawn) player
	\item $R_W$: Elo rating of the withdrawn player
	\item $R_i$: Elo rating of player $i$
	\item $\bar{R}$: Average Elo rating of opponents in the tournament field
\end{itemize}

\paragraph{Observed Data.}
\begin{itemize}
	\item $n$: Number of games completed by $W$ before withdrawal
	\item $P$: Total points scored by $W$ in completed games (win $= 1$, draw $= \frac{1}{2}$, loss $= 0$)
	\item $\bar{s}_W = P/n$: Average score achieved by $W$ in completed games
	\item $S_i^{\text{played}}$: Points scored by player $i$ in games actually played
\end{itemize}

\paragraph{Elo Expectations.}
\begin{itemize}
	\item $E_{i,W}$: Elo-based expected score for player $i$ against $W$, defined as 
	\begin{equation}
		E_{i,W} = \frac{1}{1 + 10^{(R_W - R_i)/400}}
	\end{equation}
	\item $\theta_0$: Prior expected scoring rate for $W$ against a typical opponent, defined as
	\begin{equation}
		\theta_0 = \frac{1}{1 + 10^{(\bar{R} - R_W)/400}}
	\end{equation}
\end{itemize}

\paragraph{Bayesian Framework.}
\begin{itemize}
	\item $\theta \in [0,1]$: True (latent) probability that $W$ scores a point against a typical opponent; captures $W$'s current tournament form
	\item $k$: Prior strength parameter representing the number of games worth of confidence in Elo predictions
	\item $w(n) = \dfrac{n}{n+k}$: Weight placed on observed performance versus prior expectation
	\item $\hat{\theta}$: Posterior mean estimate of $\theta$ after observing $n$ games
\end{itemize}

\paragraph{Imputation.}
\begin{itemize}
	\item $I_{i,W}$: Imputed score awarded to player $i$ for their unplayed game against $W$
\end{itemize}

\medskip
\noindent For players who already faced $W$, their result stands as played. For players yet to face $W$, we impute a score $I_{i,W}$ representing the points player $i$ receives for their unplayed game. The central result of this paper is the Bayesian BLUP imputation formula:
\begin{equation}
	I_{i,W} = E_{i,W} + \frac{n}{n+k}\bigl(1 - \bar{s}_W - \bar{E}\bigr),
	\label{eq:fullblup_notation}
\end{equation}
where \(\bar{E} = \frac{1}{n}\sum_{j\in\text{played}} E_{j,W}\) is the average Elo expectation of the opponents actually faced by \(W\). This formula is derived in Section~\ref{sec:blupimp}.

	\subsection{Forfeit}
	
	The simplest approach awards a full point to every unplayed opponent, treating the withdrawal as equivalent to a forfeit in each remaining game. Formally, this sets $I_{i,W} = 1.0$ for all players $i$ who had not yet faced $W$. This method has the virtue of simplicity and requires no calculation beyond identifying which games were unplayed. It also provides a strong disincentive against strategic withdrawal, as the withdrawing player receives zero points for all remaining games.
	
	However, the forfeit approach violates competitive parity in a fundamental way. Players who already faced $W$ and lost cannot recover those points, while players who had not yet faced $W$ receive full points regardless of whether they would likely have won, drawn, or lost the actual game. When the withdrawn player was performing poorly, as in the Bucharest case, this discrepancy may be somewhat mitigated. But when a strong player withdraws while performing well, unplayed opponents receive a windfall entirely disconnected from probable game outcomes.
	
	\subsection{Annulment}
	
	At the opposite extreme, full annulment removes all games involving the withdrawn player from the standings entirely. This reduces the tournament to a round-robin among the $n-1$ remaining players, with each player's score computed only from games against other active participants. This approach preserves equal conditions among remaining players in the sense that everyone's score derives from the same set of potential opponents.  Annulment effectively imputes each opponent's score against the withdrawn player as their average performance in other games---when standings are recalculated on a reduced round-robin, each player's percentage reflects only their non-annulled games. For example, player A scores 6/8 (rate = 0.75) in the tournament. Annulment is equivalent to imputing a score of 0.75 for player A against player W. Therefore, annulment should not be misconstrued as doing nothing.
	
	The annulment approach, however, erases legitimately earned results. Players who defeated $W$ before the withdrawal lose that victory, potentially dropping in the standings despite having demonstrated superior performance. The tournament becomes incomplete in a fundamental sense, as the final standings no longer reflect all games that were actually played. When a weaker player defeats a strong player who subsequently withdraws, annulment particularly harms the weaker player by removing precisely the result that demonstrated their competitive strength.
	
\subsection{Pure Elo-Based Imputation}

The Elo rating system \citep{elo1978} provides a natural basis for 
imputation through its formula for expected scores. The theoretical foundation 
lies in the Bradley-Terry model for paired comparisons \citep{glickman1999}, 
which asserts that for two competitors with strength parameters $\lambda_i$ 
and $\lambda_j$, competitor $i$ defeats competitor $j$ with probability 
$\lambda_i/(\lambda_i + \lambda_j)$. The Elo system implements this model 
using a logistic function with ratings scaled such that a 400-point 
difference corresponds to an expected score of approximately 0.91 for the 
stronger player. Given two players with ratings $R_i$ and $R_W$, the 
expected score for player $i$ against player $W$ is given by
\begin{equation}
	E_{i,W} = \frac{1}{1 + 10^{(R_W - R_i)/400}}.
\end{equation}
This formula has proven remarkably well-calibrated over decades of use in 
chess rating systems. Glickman demonstrated that the 
classical Elo updating algorithm is a special case of optimal Bayesian 
updating under the Bradley-Terry model when opponent strengths are assumed 
known without error \citep{glickman1999}. His work extended the framework to incorporate 
uncertainty in both players' strengths, yielding more refined estimates 
that account for rating reliability.

Pure Elo-based imputation simply sets $I_{i,W} = E_{i,W}$, awarding each 
unplayed opponent their expected score based on the rating differential.

This approach has several attractive properties. It is opponent-specific, 
awarding higher imputed scores to higher-rated unplayed opponents who would 
have been more likely to score well against $W$. It is grounded in 
established paired comparison theory with well-understood statistical 
properties. Under the assumption that ratings are 
well-calibrated, it provides unbiased imputation in the sense that the 
expected imputation error is zero.

The limitation of pure Elo-based imputation is that it ignores actual 
tournament form. Ratings reflect historical performance accumulated over 
many games, but a player's current form may differ substantially from their 
rating expectation. In the Bucharest case, Firouzja's 0.200 scoring rate 
was dramatically below his rating expectation, suggesting that factors 
specific to this tournament---illness, preparation gaps, or psychological 
difficulties---were affecting his play. Pure Elo-based imputation would 
award scores based on Firouzja's historical strength, not his demonstrated 
current form.
	
	\subsection{Pure Performance-Based Imputation}
	
	An alternative approach bases imputation entirely on the withdrawn player's actual tournament performance. If $W$ scored at rate $\bar{s}_W$ in games actually played, then opponents might expect to score at rate $1 - \bar{s}_W$ against $W$. Pure performance-based imputation sets $I_{i,W} = 1 - \bar{s}_W$ for all unplayed opponents.
	
	This approach grounds imputation in observed results rather than historical expectations. In the Bucharest case, Firouzja's opponents scored at a collective rate of 0.800 against him, and pure performance-based imputation would award 0.800 to every unplayed opponent. This captures the evident reality that Firouzja was performing poorly in this particular tournament.
	
	However, pure performance-based imputation has significant limitations. It treats all unplayed opponents identically, ignoring strength differentials among them. A 2762-rated Keymer and a 2655-rated Deac would both receive 0.800, despite the substantial difference in their expected performance against Firouzja. Moreover, performance-based imputation becomes highly volatile when $n$ is small. A single lucky or unlucky result in the early rounds can dramatically shift imputed scores for all remaining opponents, introducing noise that has little to do with underlying playing strength.
	
\section{A Framework for Unplayed Score Imputation} \label{sec:framework}

We develop a rigorous statistical framework for imputing unplayed games. The key idea is to model the score of each opponent against the withdrawn player \(W\) as the sum of an Elo‑based expectation, a common random effect representing \(W\)'s current form, and an idiosyncratic error. This approach naturally balances prior information (Elo ratings) with observed data, with the balance determined by the relative magnitudes of the variance components.

\subsection{Beta‑Binomial Model for a Typical Opponent} \label{sec:betabinom}

We first consider the simplest case: assume all opponents are exchangeable (equal strength). Let \(\theta \in [0,1]\) be the probability that \(W\) scores a point against a typical opponent. Based on Elo ratings, the expected value of \(\theta\) is
\[
\theta_0 = \frac{1}{1 + 10^{(\bar{R} - R_W)/400}},
\]
where \(\bar{R}\) is the average rating of the tournament field.

We place a conjugate Beta prior on \(\theta\):
\[
\theta \sim \mathrm{Beta}\bigl(k\theta_0,\; k(1-\theta_0)\bigr),
\]
where \(k>0\) is the prior sample size (confidence in Elo). The prior density is
\[
\pi(\theta) = \frac{\Gamma(k)}{\Gamma(k\theta_0)\Gamma(k(1-\theta_0))}\,\theta^{k\theta_0-1}(1-\theta)^{k(1-\theta_0)-1}.
\]

Suppose \(W\) plays \(n\) games before withdrawal and scores a total of \(P\) points. Conditional on \(\theta\), the number of points \(P\) follows a Binomial distribution (treating draws as half‑points in expectation, which is a standard approximation):
\[
P \mid \theta \sim \mathrm{Binomial}(n, \theta).
\]
The likelihood function is therefore
\[
L(\theta; P, n) = \binom{n}{P} \theta^{P} (1-\theta)^{n-P}.
\]

By Bayes' theorem, the posterior is:
\[
\theta \mid P, n \sim \mathrm{Beta}\bigl(k\theta_0 + P,\; k(1-\theta_0) + n - P\bigr).
\]
The posterior mean is
\[
\hat{\theta} = \frac{k\theta_0 + P}{k+n} = \frac{k}{k+n}\theta_0 + \frac{n}{k+n}\bar{s}_W,
\]
where \(\bar{s}_W = P/n\) is the observed scoring rate of \(W\).

Under exchangeability, the imputed score for any unplayed opponent would be \(1-\hat{\theta}\), the same for all. This ignores strength differences among opponents, motivating the more general random effects model below.

\subsection{Random Effects Model for Opponent‑Specific Imputation}

To incorporate opponent strength, let \(Y_i\) denote the score obtained by opponent \(i\) in their game against \(W\). For played games, \(Y_i\) is observed; for unplayed opponents, it is unobserved. The Elo‑expected score of opponent \(i\) against \(W\) is
\[
E_{i,W} = \frac{1}{1 + 10^{(R_W - R_i)/400}}.
\]
We assume
\begin{equation}
	Y_i = E_{i,W} + \beta + \varepsilon_i, \qquad i = 1,\dots,N,
	\label{eq:ranef}
\end{equation}
where \(\beta\) is a common random effect (capturing \(W\)'s deviation from his rating) and \(\varepsilon_i\) is independent noise:
\[
\mathbb{E}[\beta]=0,\quad \mathrm{Var}(\beta)=\tau^2,\qquad 
\mathbb{E}[\varepsilon_i]=0,\quad \mathrm{Var}(\varepsilon_i)=\sigma^2,
\]
with \(\beta\) and all \(\varepsilon_i\) uncorrelated. The term \(\beta\) is shared by all opponents: \(\beta>0\) means opponents score better than Elo predicts (i.e., \(W\) is underperforming). The error \(\varepsilon_i\) absorbs game‑specific randomness (draws, discrete outcomes). We adopt a normal prior \(\beta \sim N(0,\tau^2)\); normality is a working assumption for convenience, but the Best Linear Unbiased Predictor (BLUP) derived below is optimal under only second‑moment assumptions.

\subsection{BLUP Derivation of the Imputation Formula} \label{sec:blupimp}

Suppose \(W\) played \(n\) games before withdrawal, with opponents \(i=1,\dots,n\). Denote
\[
\bar{Y} = \frac{1}{n}\sum_{i=1}^n Y_i = 1 - \bar{s}_W,\qquad 
\bar{E} = \frac{1}{n}\sum_{i=1}^n E_{i,W}.
\]
The BLUP of \(\beta\) (see, e.g., \citep{robinson1991}) is
\[
\hat{\beta} = \frac{n\tau^2}{\sigma^2 + n\tau^2}\bigl(\bar{Y} - \bar{E}\bigr).
\]
For an unplayed opponent \(j\), the predictive mean is
\[
\hat{Y}_j = E_{j,W} + \hat{\beta}
= E_{j,W} + \frac{n\tau^2}{\sigma^2 + n\tau^2}\bigl(\bar{Y} - \bar{E}\bigr).
\]
Define \(k = \sigma^2/\tau^2\). Then the weight becomes \(n/(n+k)\), and
\[
\hat{Y}_j = E_{j,W} + \frac{n}{n+k}\bigl(1 - \bar{s}_W - \bar{E}\bigr).
\]
This is the \textbf{Best Linear Unbiased Predictor (BLUP)} and does not require any further approximation. We therefore adopt it as our imputation formula:
\begin{equation}
	\boxed{I_{i,W} = E_{i,W} + \frac{n}{n+k}\bigl(1 - \bar{s}_W - \bar{E}\bigr).}
	\label{eq:full_blup}
\end{equation}

\subsection{Relation to the Beta‑Binomial Model}
\label{sec:connection}

The Beta‑Binomial model (Section~\ref{sec:betabinom}) assumes all opponents are exchangeable, i.e. \(E_{i,W}=E_0\) for all \(i\). In that case \(\bar{E}=E_0\) and the Bayesian BLUP becomes
\[
\hat{Y}_j = E_0 + \frac{n}{n+k}(1-\bar{s}_W - E_0) = \frac{n}{n+k}(1-\bar{s}_W) + \frac{k}{n+k}E_0.
\]
Thus the two models coincide. The random effects model therefore generalises the Beta‑Binomial framework by allowing opponent‑specific Elo expectations.

\subsection{Theoretical Properties}
\label{sec:theory}

We now establish key properties of the Bayesian BLUP imputation estimator \eqref{eq:full_blup}.

\begin{proposition}[Consistency]
	As the number of observed games increases, \(I_{i,W} \to 1 - \bar{s}_W\) (pure performance‑based imputation) when the field is homogeneous; otherwise it converges to \(E_{i,W} + (1-\bar{s}_W - \bar{E})\).  
\end{proposition}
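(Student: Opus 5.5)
The plan is to work directly from the closed form \eqref{eq:full_blup}. Holding $k=\sigma^2/\tau^2$ fixed, I will treat the imputation as a function of $n$ through the weight $w(n)=n/(n+k)$ and the statistic $D_n = 1-\bar{s}_W-\bar{E}$. The identity
\[
I_{i,W} - \bigl(E_{i,W} + D_n\bigr) = \bigl(w(n)-1\bigr) D_n = -\frac{k}{n+k}\,D_n
\]
reduces the proposition to showing that $\frac{k}{n+k}D_n \to 0$ as $n\to\infty$. The target $E_{i,W}+D_n$ is itself $n$-dependent, so convergence here is read as this difference vanishing. Equivalently, if $\bar{s}_W$ and $\bar{E}$ have limits (for instance almost surely under the random effects model \eqref{eq:ranef}), the limit is $E_{i,W}+(1-\bar{s}_W-\bar{E})$ evaluated at those limits.

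The bound comes from elementary boundedness. Every $Y_i\in\{0,\tfrac12,1\}$ and every $E_{j,W}\in(0,1)$, so $\bar{Y}=1-\bar{s}_W\in[0,1]$ and $\bar{E}\in(0,1)$. Hence $|D_n|\le 1$ uniformly in $n$ and in the realised results, giving
\[
\Bigl|I_{i,W} - \bigl(E_{i,W}+1-\bar{s}_W-\bar{E}\bigr)\Bigr| \le \frac{k}{n+k}\longrightarrow 0 .
\]
This is a deterministic bound, so it needs no distributional assumption beyond $k<\infty$. That fits the paper's remark that the BLUP rests only on second moments.

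For the homogeneous case I will invoke the reduction in Section~\ref{sec:connection}. If $E_{j,W}=E_0$ for all $j$, then $\bar{E}=E_0=E_{i,W}$ and the two Elo terms cancel. The limit $E_{i,W}+1-\bar{s}_W-\bar{E}$ therefore collapses to $1-\bar{s}_W$, which is pure performance-based imputation. Equivalently, the Beta--Binomial posterior mean $\hat\theta=\frac{k}{k+n}\theta_0+\frac{n}{k+n}\bar{s}_W$ satisfies $|\hat\theta-\bar{s}_W|\le k/(n+k)$, and $I_{i,W}=1-\hat\theta$.

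The main obstacle is interpretive rather than technical. In a round robin $n$ is capped at $N-1$, so ``$n\to\infty$'' must be read as a sequence of increasingly large fields (or, equivalently, $k/n\to 0$), and I will state this framing explicitly. I also plan an optional remark on the stochastic version. Under \eqref{eq:ranef} with independent $\varepsilon_i$, the law of large numbers gives $\bar{Y}-\bar{E}=\beta+\bar\varepsilon\to\beta$ almost surely. Consequently $\hat{Y}_j - (E_{j,W}+\beta)\to 0$, meaning the imputation recovers the true conditional mean of the unplayed score. This follows from the same bound plus $\bar\varepsilon\to 0$.
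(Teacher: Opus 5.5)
Your proof is correct and follows the paper's argument: the weight $n/(n+k)\to 1$, so $I_{i,W}$ approaches $E_{i,W}+(1-\bar{s}_W-\bar{E})$, and in a homogeneous field $E_{i,W}=\bar{E}$, which reduces the limit to $1-\bar{s}_W$. Your uniform bound $\bigl|I_{i,W}-(E_{i,W}+1-\bar{s}_W-\bar{E})\bigr|\le k/(n+k)$, obtained from $|1-\bar{s}_W-\bar{E}|\le 1$, usefully tightens the paper's proof, which tacitly holds $1-\bar{s}_W-\bar{E}$ fixed as $n\to\infty$ even though it depends on $n$.
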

\begin{proof}
	Since \(\lim_{n\to\infty} n/(n+k)=1\), the term \(\frac{n}{n+k}(1-\bar{s}_W - \bar{E}) \to 1-\bar{s}_W - \bar{E}\). Thus \(I_{i,W} \to E_{i,W} + 1-\bar{s}_W - \bar{E}\). If the field is homogeneous, \(E_{i,W}=\bar{E}\) and the limit is \(1-\bar{s}_W\).
\end{proof}
The quantity \(1 - \bar{s}_W\) is the observed average score of opponents against \(W\) in the played games, and \(\bar{E}\) is the average Elo expectation of those opponents. Thus the term \(d = 1 - \bar{s}_W - \bar{E}\) represents the systematic deviation of actual opponent performance from Elo predictions. If \(d > 0\), opponents scored better than expected (i.e., \(W\) underperformed relative to his rating); if \(d < 0\), opponents scored worse (i.e., \(W\) overperformed). Because \(\beta\) in the random effects model is common to all opponents, the same adjustment \(d\) is added to every unplayed opponent’s Elo expectation \(E_{i,W}\). 
This is natural: with infinitely many played games, we know the common form deviation \(\beta\) exactly; the best prediction for any unplayed opponent is then their Elo expectation plus that known deviation.  Pure performance‑based imputation \(1-\bar{s}_W\) would only be recovered if \(E_{i,W} = \bar{E}\) (i.e., the unplayed opponent has the same Elo expectation as the average played opponent). In a heterogeneous field, the Bayesian BLUP correctly retains opponent‑specific differences while applying a uniform shift derived from the observed data.

\begin{proposition}[Point Conservation]
	For any unplayed game, \(I_{i,W} + I_{W,i} = 1\), where \(I_{W,i}\) is the imputed score for \(W\) against \(i\) using the same rule.
\end{proposition}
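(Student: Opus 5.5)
The plan is to make precise what ``the same rule'' means for $I_{W,i}$, and then reduce the claim to a pair of symmetry identities. Applying the rule from $W$'s side means predicting $W$'s score against $i$ with the same random-effects model, but with every quantity viewed from $W$'s perspective. The Elo expectation is $E_{W,i} = 1/(1+10^{(R_i-R_W)/400})$. The observed average score is $\bar{s}_W$, so the roles of $\bar{Y}$ and $1-\bar{s}_W$ are played by $\bar{s}_W$ and $1-\bar{Y}$. The average Elo expectation over the played games is $\bar{E}^{W} = \frac{1}{n}\sum_{j\in\text{played}} E_{W,j}$. The weight $n/(n+k)$ is unchanged, because $n$ counts the same played games and $k=\sigma^2/\tau^2$ is a property of the variance components, not of perspective. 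Writing the formula out, the rule gives
\[
I_{W,i} = E_{W,i} + \frac{n}{n+k}\bigl(\bar{s}_W - \bar{E}^{W}\bigr).
\]

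The key step is the elementary Elo symmetry $E_{W,i} + E_{i,W} = 1$. It holds because $\frac{1}{1+10^{x}} + \frac{1}{1+10^{-x}} = 1$ with $x = (R_W-R_i)/400$. Averaging this identity over the $n$ played opponents gives $\bar{E}^{W} = 1-\bar{E}$. Substituting both identities into the expression above yields
\[
I_{W,i} = 1 - E_{i,W} + \frac{n}{n+k}\bigl(\bar{s}_W - 1 + \bar{E}\bigr) = 1 - E_{i,W} - \frac{n}{n+k}\bigl(1-\bar{s}_W-\bar{E}\bigr) = 1 - I_{i,W},
\]
where the last equality uses the boxed formula \eqref{eq:full_blup}. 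Rearranging gives $I_{i,W}+I_{W,i}=1$. One could alternatively phrase this as saying that the random effect from $W$'s side is $-\beta$, with noise $-\varepsilon_i$. This has the same second moments, so its BLUP is $-\hat\beta$, which is exactly the sign flip seen above.

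The main obstacle is not algebraic but definitional. One has to justify that ``the same rule'' reuses the identical $k$ and the identical played set, and that $W$'s observed score enters as $\bar{s}_W = 1-\bar{Y}$. This last point relies on each played game's two scores summing to one, which holds for win, draw, and loss. I would state these conventions explicitly before the computation; once they are fixed, the rest is the two-line substitution above.
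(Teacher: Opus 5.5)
Your proof is correct and follows the paper's argument. Both write $I_{W,i}$ with $1-\bar{s}_W$ replaced by $\bar{s}_W$ and $\bar{E}$ replaced by $1-\bar{E}$, then use $E_{i,W}+E_{W,i}=1$ to make the sum equal $1$; your averaging step $\bar{E}^{W}=1-\bar{E}$ and the $-\beta$ remark simply spell out what the paper calls ``by symmetry.''
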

\begin{proof}
	The imputed score for \(W\) would be \(I_{W,i} = E_{W,i} + \frac{n}{n+k}(\bar{s}_W - (1-\bar{E}))\), because \(1-\bar{s}_W\) becomes \(\bar{s}_W\) and \(\bar{E}\) becomes \(1-\bar{E}\) by symmetry. Adding the two expressions and using \(E_{i,W}+E_{W,i}=1\) yields \(1\).
\end{proof}

\begin{proposition}[Optimality]
	The estimator \eqref{eq:full_blup} is the \textbf{Best Linear Unbiased Predictor (BLUP)} under the random effects model \eqref{eq:ranef}. Hence, among all linear unbiased predictors of the unplayed opponent’s score, it has minimum mean squared prediction error.
\end{proposition}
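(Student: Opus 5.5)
Under model \eqref{eq:ranef}, I would prove optimality by direct minimisation. The target is the unobserved $Y_j = E_{j,W} + \beta + \varepsilon_j$ for an unplayed opponent $j$. Since the $E_{i,W}$ are known constants, prediction reduces to predicting $\beta + \varepsilon_j$ from the residuals $Z_i = Y_i - E_{i,W} = \beta + \varepsilon_i$, $i=1,\dots,n$.

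A linear predictor has the form $\tilde Y_j = E_{j,W} + c + \sum_{i=1}^n a_i Z_i$. Unbiasedness, $\mathbb{E}[\tilde Y_j - Y_j]=0$, forces $c=0$, because $\mathbb{E}[Z_i]=\mathbb{E}[\beta+\varepsilon_j]=0$. Here I would state explicitly that unbiasedness is understood in the BLUP sense: zero mean prediction error over the joint distribution of $(\beta,\varepsilon)$. That is the reading under which the shrinkage weight $n/(n+k)$ survives. The stricter requirement of conditional unbiasedness given $\beta$ would force $\sum a_i = 1$ and recover the non-shrunk limit from Proposition~1.

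With $c=0$, expand the mean squared error using the uncorrelatedness of $\beta$, the $\varepsilon_i$, and $\varepsilon_j$ (note $j$ is not among the played indices, so $\varepsilon_j$ is uncorrelated with every $Z_i$):
\[
\mathrm{MSE}(a) = \mathbb{E}\Bigl[\Bigl(\textstyle\sum_i a_i Z_i - \beta - \varepsilon_j\Bigr)^2\Bigr]
= \tau^2\Bigl(\textstyle\sum_i a_i - 1\Bigr)^2 + \sigma^2 \textstyle\sum_i a_i^2 + \sigma^2 .
\]
This is a strictly convex quadratic in $a$. For fixed $A=\sum_i a_i$, the term $\sum_i a_i^2$ is minimised by equal weights $a_i=A/n$ (Cauchy--Schwarz). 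It then remains to minimise $\tau^2(A-1)^2 + \sigma^2 A^2/n$ over $A$. This gives $A = n\tau^2/(\sigma^2+n\tau^2) = n/(n+k)$ with $k=\sigma^2/\tau^2$. Substituting yields $\tilde Y_j = E_{j,W} + \frac{n}{n+k}(\bar Y - \bar E)$, and since $\bar Y = 1-\bar s_W$ this is exactly \eqref{eq:full_blup}. Equivalently, one can note that $\hat\beta$ is the best linear predictor $\mathrm{Cov}(\beta,Z)\mathrm{Var}(Z)^{-1}Z$, with $\mathrm{Var}(Z)=\sigma^2 I+\tau^2\mathbf{1}\mathbf{1}^\top$ inverted by Sherman--Morrison. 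That is the Henderson/\citet{robinson1991} route, and I would mention it as a cross-check.

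The step needing care is the interpretive one: fixing what "unbiased" means. The inflated MSE from any competitor should also be shown strictly larger, and strict convexity settles that by uniqueness of the minimiser. I would also remark that only the second-moment assumptions of \eqref{eq:ranef} are used, so the normality of $\beta$ is not needed. Optimality is relative to the model, with $k$ treated as known.
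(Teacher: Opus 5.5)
Your proof is correct, but it takes a more self-contained route than the paper. The paper gives no separate argument for this proposition. It rests on Section~\ref{sec:blupimp}, which quotes the BLUP of $\beta$ from \citet{robinson1991}, $\hat\beta = \frac{n\tau^2}{\sigma^2+n\tau^2}(\bar Y-\bar E)$, and then sets $\hat Y_j = E_{j,W}+\hat\beta$; optimality is thus inherited from general mixed-model theory by citation.

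You instead reduce to the residuals $Z_i=\beta+\varepsilon_i$ and parametrise all affine predictors. You use marginal unbiasedness to remove the intercept and write the mean squared error explicitly as $\tau^2(A-1)^2+\sigma^2\sum_i a_i^2+\sigma^2$. You then minimise it in two steps: equal weights by Cauchy--Schwarz, then a scalar quadratic in $A$.

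This buys several things the paper leaves implicit. It pins down that ``unbiased'' must be read marginally over $\beta$: conditional unbiasedness would force $A=1$ and give the unshrunk limit of the Consistency proposition, so the shrinkage genuinely depends on this reading. It justifies adding $E_{j,W}$ to $\hat\beta$, since $\varepsilon_j$, being uncorrelated with the data, only contributes the irreducible $\sigma^2$. It establishes uniqueness of the optimal predictor via strict convexity. It also makes visible that only second moments are used.

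The paper's citation route is shorter and extends without change to richer covariance structures, e.g.\ correlated or heteroscedastic errors. There your equal-weights step would no longer apply, and you would fall back on the general formula $\mathrm{Cov}(\beta,Z)\mathrm{Var}(Z)^{-1}Z$ that you already mention as a cross-check.
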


\subsection{Choice of the Prior Strength Parameter}

The parameter \(k = \sigma^2/\tau^2\) has a clear interpretation: it is the ratio of game‑outcome variance (\(\sigma^2\)) to the variance of the random form effect (\(\tau^2\)). In chess, a single game outcome has variance approximately \(\sigma^2 \approx 0.25\) (for a typical win probability near 0.5). Choosing \(k=3\) implies \(\tau^2 = 0.25/3 \approx 0.083\), a standard deviation of form deviation of about 0.29 on the score scale – plausible for elite tournaments. This matches the intuitive prior sample size of three games. For faster time controls (higher variance), smaller \(k\) may be appropriate; for longer events, larger \(k\) gives more smoothing.

\subsection{Posterior Uncertainty}

Although the imputation formula gives a point estimate, uncertainty can be quantified via the posterior distribution of \(\beta\) in the random effects model. The posterior variance of \(\beta\) is
\[
\mathrm{Var}(\beta \mid \mathbf{Y}) = \frac{\sigma^2 \tau^2}{\sigma^2 + n\tau^2} = \frac{\tau^2}{1 + n/k}.
\]
A \(95\%\) credible interval for \(\beta\) can be computed, and transformed to a credible interval for \(I_{i,W} = E_{i,W} + \beta\) (since \(\varepsilon_j\) has zero mean). For practical reporting, the approximate standard deviation of the imputed score is \(\sqrt{\mathrm{Var}(\beta \mid \mathbf{Y})}\), which decreases as \(n\) increases.

\section{Application to Bucharest 2026} \label{sec:appl}

We now apply the Bayesian framework to the Bucharest 2026 withdrawal. The relevant parameters are as follows. Firouzja's rating was $R_W = 2759$. He played $n = 5$ games, scoring $P = 1.0$ point for an average of $\bar{s}_W = 0.200$. The average rating of his opponents was approximately 2750, yielding a prior expectation of $\theta_0 \approx 0.51$ for Firouzja's scoring rate based on the Elo formula \citep{elo1978}. With the recommended prior strength $k = 3$, the weight on tournament performance is $w(5) = 5/(5+3) = 0.625$.

The posterior distribution for Firouzja's scoring probability is
\begin{equation}
	\theta | P, n \sim \text{Beta}(3 \times 0.51 + 1.0, 3 \times 0.49 + 4.0) = \text{Beta}(2.53, 5.47)
\end{equation}
which has posterior mean
\begin{equation}
	\hat{\theta} = \frac{2.53}{2.53 + 5.47} = 0.316
\end{equation}
and posterior standard deviation approximately 0.15. This estimate lies between the prior expectation of 0.51 and the observed scoring rate of 0.200, appropriately reflecting our uncertainty given the limited sample size.

For the Bayesian BLUP imputation we also need the average Elo expectation of the five opponents Firouzja actually faced: Caruana (2793), Praggnanandhaa (2741), Vachier-Lagrave (2717), Giri (2753), and Sindarov (2745). Their individual Elo expectations are 0.549, 0.474, 0.440, 0.492, and 0.480, respectively, giving $\bar{E} = 0.487$. The common adjustment term is $\Delta = w(1-\bar{s}_W - \bar{E}) = 0.625 \times (0.800 - 0.487) = 0.196$.

For each of the four unplayed opponents, we compute the Elo expectation $E_{i,W}$ based on the rating differential, then add $\Delta$. Consider first Vincent Keymer, rated 2762. The rating differential is $R_W - R_i = 2759 - 2762 = -3$, yielding an Elo expectation of $E_{i,W} = 1/(1 + 10^{-3/400}) = 0.504$. The Bayesian BLUP imputation is then
\begin{equation}
	I_{i,W} = 0.504 + 0.196 = 0.700.
\end{equation}

Applying the same calculation to the remaining unplayed opponents yields the results shown in Table~\ref{tab:imputation}. Wesley So, rated 2754, receives an imputed score of 0.689. Jorden Van Foreest, rated 2736, receives 0.663. Bogdan-Daniel Deac, rated 2655, receives 0.551.

\begin{table}[H]
	\centering
	\caption{Bayesian BLUP Imputation Results for Bucharest 2026 Unplayed Games}
	\label{tab:imputation}
	\begin{tabular}{lrrrrr}
		\toprule
		Player & Elo & Elo Exp. & Performance & Imputed (Bayes BLUP) & Traditional \\
		\midrule
		Keymer & 2762 & 0.504 & 0.800 & 0.700 & 1.000 \\
		So & 2754 & 0.493 & 0.800 & 0.689 & 1.000 \\
		Van Foreest & 2736 & 0.467 & 0.800 & 0.663 & 1.000 \\
		Deac & 2655 & 0.355 & 0.800 & 0.551 & 1.000 \\
		\bottomrule
	\end{tabular}
\end{table}

Several features of these results merit attention. First, the imputed scores preserve the ranking of unplayed opponents by strength: Keymer receives the highest imputed score, followed by So, then Van Foreest, then Deac. This reflects the Elo-based component, which correctly awards higher expected scores to higher-rated players. Second, all imputed scores are substantially below the 1.000 awarded by traditional forfeit rules, reflecting Firouzja's poor tournament form. Compared to the simple convex combination (which would give 0.689, 0.685, 0.675, 0.633), the Bayesian BLUP lowers the scores for the weakest opponent (Deac) more sharply because it accounts for the fact that Firouzja’s played opponents were relatively strong ($\bar{E}=0.487$); his poor performance is thus even more notable, and the adjustment $\Delta$ is applied uniformly, reducing all imputed scores by the same amount. Third, the spread among imputed scores (0.551 to 0.700) is wider than that of the simple heuristic (0.633–0.689), allowing Elo differences to play a larger role.

The traditional forfeit approach overcompensates each unplayed opponent by 0.300 to 0.449 points relative to the Bayesian BLUP imputation. These magnitudes are substantial in the context of elite round-robin tournaments, where half a point frequently separates adjacent places in the final standings. The bias introduced by forfeit rules could easily alter finishing positions, prize distributions, and qualification outcomes.

Consider now the players who faced Firouzja before his withdrawal. Caruana, Giri, and Vachier-Lagrave each scored 1.0 against Firouzja, defeating him in their individual games. Under the traditional forfeit approach, these players retain their earned points, but so do all unplayed opponents receive 1.0---there is no relative advantage from having actually defeated Firouzja. Under full annulment, the victories would be erased entirely, penalizing these players for their actual achievement. Under Bayesian BLUP imputation, their 1.0 stands while unplayed opponents receive only 0.55 to 0.70. The competitive advantage earned through actual play is preserved.

\section{Sensitivity Analysis} \label{sec:sens}

To assess the robustness of our results, we examine how the Bayesian BLUP imputed scores vary with the choice of prior strength $k$. The weight $w = n/(n+k)$ changes, which alters the common adjustment $\Delta = w(1-\bar{s}_W - \bar{E})$. Table~\ref{tab:sensitivity} shows the imputed scores for Keymer (the highest-rated unplayed opponent) and Deac (the lowest-rated) under different values of $k$.

\begin{table}[H]
	\label{tab:sensitivity}
	\centering
	\caption{Sensitivity of Bayesian BLUP Imputed Scores to Prior Strength $k$}
	\label{tab:sensitivity}
	\begin{tabular}{lrrrr}
		\toprule
		$k$ & $w(n)$ & Keymer & Deac & Spread \\
		\midrule
		1 & 0.833 & 0.760 & 0.729 & 0.031 \\
		2 & 0.714 & 0.727 & 0.674 & 0.053 \\
		3 & 0.625 & 0.700 & 0.551 & 0.149 \\
		4 & 0.556 & 0.678 & 0.473 & 0.205 \\
		5 & 0.500 & 0.661 & 0.410 & 0.251 \\
		\bottomrule
	\end{tabular}
\end{table}

As $k$ increases, the imputed scores decrease because less weight is placed on the observed performance deviation. The spread between high-rated and low-rated opponents increases with $k$, as the constant shift term shrinks and Elo differences dominate. For any reasonable choice of $k$ in the range 2 to 4, the imputed scores remain substantially below the traditional forfeit value of 1.0, confirming that forfeit rules overcompensate unplayed opponents. The sensitivity analysis demonstrates that our main findings are robust to reasonable variation in the prior strength parameter.

\section{Monte Carlo Evaluation via Cross-Validation} \label{sec:sim}
\label{sec:simulation}

The theoretical properties established in Section~\ref{sec:theory} demonstrate that the Bayesian BLUP imputation is consistent, point‑conserving, and, as a BLUP, minimizes mean squared prediction error among all linear unbiased predictors. However, these results rely on assumptions---well-calibrated Elo ratings, independent game outcomes---that may not hold perfectly in practice. To complement the theoretical analysis, we conduct a Monte Carlo simulation study that evaluates predictive accuracy using observable ground truth: actual game outcomes.

\subsection{The Ground Truth Problem}

Any simulation study comparing imputation methods must define a ground truth against which predictions are evaluated. A natural but problematic approach is to define ground truth as the expected score based on the withdrawn player's ``true'' tournament form---a latent variable representing their actual playing strength during the event. This approach, however, rewards methods that happen to align with an unknowable quantity and can produce misleading comparisons. For example, pure Elo-based imputation achieves perfect accuracy by construction when true form exactly equals rating, an outcome that cannot be verified in practice.

We instead adopt leave-one-out cross-validation (LOOCV), which evaluates predictive accuracy using only observable data. For a withdrawn player who completed $n$ games before withdrawal, LOOCV proceeds as follows: for each played game $g \in \{1, \ldots, n\}$, we hold out game $g$, estimate the withdrawn player's scoring rate using the remaining $n-1$ games, predict the outcome of game $g$ using each imputation method, and compare predictions to the actual result. The prediction error, averaged across all held-out games, provides an unbiased estimate of each method's predictive accuracy.

This approach answers the question tournament organizers actually face: given ratings and observed results, which method best predicts actual chess outcomes?

\subsection{Simulation Design}

We simulate 10,000 round-robin tournaments for each combination of three experimental factors:

\paragraph{Withdrawal timing.} In a 10-player round-robin (9 rounds), the FIDE 50\% threshold falls at 4.5 games. We examine three scenarios that bracket this critical boundary:
\begin{itemize}
	\item \textbf{Early ($n=3$):} Well below threshold; FIDE rule is annulment
	\item \textbf{Mid-below-50\% ($n=4$):} Just below threshold (44.4\%); FIDE rule is annulment
	\item \textbf{Mid-above-50\% ($n=5$):} Just above threshold (55.6\%); FIDE rule is forfeit
\end{itemize}

\paragraph{Player form.} The withdrawn player's true tournament strength deviates from their official rating by $\delta \in \{-150, 0, +100\}$ Elo points, representing underperforming (as in the Firouzja case), neutral, and overperforming scenarios.
\paragraph{Rating distribution.} We consider two field compositions: a narrow spread typical of elite round-robins (ratings ranging from 2725 to 2790, standard deviation approximately 16) and a wide spread typical of open events (ratings ranging from 2540 to 2790, standard deviation approximately 75). Exact vectors are provided in the simulation code.

The full factorial design yields $3 \times 3 \times 2 = 18$ scenarios. For each simulated tournament, game outcomes are generated according to a draw-adjusted Elo model: the probability that player $A$ defeats player $B$ is proportional to the Elo expectation, with draws occurring at a base rate of 50\% modulated by the rating difference. The withdrawn player's true win probability is computed using their form-adjusted rating $R_W + \delta$ rather than their official rating $R_W$.

\subsection{Methods Compared}

We evaluate four imputation approaches:

\paragraph{1. FIDE.} Annulment: When a player withdraws before completing 50\% of scheduled games, all results involving that player are annulled. This effectively imputes each opponent's score  against the withdrawn player as their average performance in other games---when standings are recalculated on a reduced round-robin, each player's percentage reflects only their non-annulled games. For example, player A scores 6/8 (rate = 0.75) in a 9-round tournament. Annulment is equivalent to imputing a score of 0.75 for player A against player W. Therefore, annulment should not be misconstrued as doing nothing.

Forfeit: When withdrawal occurs after the 50\% threshold, played games stand and unplayed opponents receive a full point (1.0).

\paragraph{2. Pure Elo.} Award the Elo-expected score $E_{i,W} = 1/(1 + 10^{(R_W - R_i)/400})$ based on official ratings, ignoring tournament performance.

\paragraph{3. Pure Performance.} Award $1 - \bar{s}_W$ to all unplayed opponents, where $\bar{s}_W$ is the withdrawn player's observed scoring rate, ignoring rating differentials.

\paragraph{4. Bayesian BLUP (optimal).} Apply the Best Linear Unbiased Predictor derived from the random effects model (Section~\ref{sec:blupimp}) with prior strength $k = 3$:
\[
I_{i,W} = E_{i,W} + \frac{n}{n+3}\bigl(1 - \bar{s}_W - \bar{E}\bigr),
\]
where $\bar{E}$ is the average Elo expectation of the played opponents.

For LOOCV evaluation, each method uses only the $n-1$ training games when predicting the held-out game outcome.

\subsection{Results} \label{sec:res}

Table~\ref{tab:loocv_rmse} presents LOOCV root mean squared error (RMSE) by scenario, with the FIDE rule applied (annulment or forfeit) and the winning method indicated for each row.

\begin{table}[htbp]
	\centering
	\caption{LOOCV Root Mean Squared Error by Scenario (Bayes BLUP, Lower is Better)}
	\label{tab:loocv_rmse}
	\small
	\begin{tabular}{@{}llllccccc@{}}
		\toprule
		Field & Timing & FIDE Rule & Form & FIDE & Elo & Perf & Bayes & Winner \\
		\midrule
		Narrow & Early ($n=3$) & Annul & under & 0.478 & 0.423 & 0.478 & 0.417 & Bayes \\
		Narrow & Early ($n=3$) & Annul & neutr & 0.458 & 0.375 & 0.458 & 0.389 & Elo \\
		Narrow & Early ($n=3$) & Annul & overp & 0.484 & 0.402 & 0.484 & 0.413 & Elo \\
		Narrow & Mid$<$50\% ($n=4$) & Annul & under & 0.450 & 0.421 & 0.450 & 0.413 & Bayes \\
		Narrow & Mid$<$50\% ($n=4$) & Annul & neutr & 0.433 & 0.376 & 0.433 & 0.390 & Elo \\
		Narrow & Mid$<$50\% ($n=4$) & Annul & overp & 0.455 & 0.400 & 0.455 & 0.411 & Elo \\
		Narrow & Mid$\ge$50\% ($n=5$) & Forfeit & under & 0.569 & 0.421 & 0.439 & 0.413 & Bayes \\
		Narrow & Mid$\ge$50\% ($n=5$) & Forfeit & neutr & 0.654 & 0.376 & 0.419 & 0.390 & Elo \\
		Narrow & Mid$\ge$50\% ($n=5$) & Forfeit & overp & 0.754 & 0.400 & 0.439 & 0.410 & Elo \\
		Wide & Early ($n=3$) & Annul & under & 0.465 & 0.423 & 0.465 & 0.408 & Bayes \\
		Wide & Early ($n=3$) & Annul & neutr & 0.480 & 0.390 & 0.480 & 0.401 & Elo \\
		Wide & Early ($n=3$) & Annul & overp & 0.464 & 0.375 & 0.464 & 0.388 & Elo \\
		Wide & Mid$<$50\% ($n=4$) & Annul & under & 0.442 & 0.427 & 0.442 & 0.406 & Bayes \\
		Wide & Mid$<$50\% ($n=4$) & Annul & neutr & 0.452 & 0.391 & 0.452 & 0.403 & Elo \\
		Wide & Mid$<$50\% ($n=4$) & Annul & overp & 0.438 & 0.376 & 0.438 & 0.389 & Elo \\
		Wide & Mid$\ge$50\% ($n=5$) & Forfeit & under & 0.623 & 0.424 & 0.426 & 0.401 & Bayes \\
		Wide & Mid$\ge$50\% ($n=5$) & Forfeit & neutr & 0.744 & 0.391 & 0.438 & 0.402 & Elo \\
		Wide & Mid$\ge$50\% ($n=5$) & Forfeit & overp & 0.827 & 0.377 & 0.424 & 0.389 & Elo \\
		\bottomrule
	\end{tabular}
\end{table}

Several patterns emerge from these results. First, \textbf{Bayesian BLUP imputation wins all six underperforming scenarios}---precisely the situation exemplified by Firouzja at Bucharest 2026 and arguably the most common withdrawal context, since players often withdraw because illness or other factors have degraded their performance. Second, pure Elo wins the remaining scenarios where form matches or exceeds rating, but Bayesian remains competitive with only marginally higher RMSE (typically 0.01–0.02 higher). Third, under annulment, FIDE's implicit imputation (opponent's average score) exactly equals pure performance-based imputation, explaining their identical RMSE values in those scenarios. Fourth, FIDE forfeit performs poorly, with RMSE values 38–112\% higher than Bayesian in the forfeit scenarios, and the largest discrepancies occur when the withdrawn player is overperforming.

Table~\ref{tab:loocv_bias} presents average bias by scenario. Positive bias indicates systematic overestimation of unplayed opponents' scores.
\begin{table}[htbp]
	\centering
	\caption{LOOCV Average Bias by Scenario (Closer to 0 is Better)}
	\label{tab:loocv_bias}
	\small
	\begin{tabular}{@{}llllcccc@{}}
		\toprule
		Field & Timing & FIDE Rule & Form & FIDE & Elo & Perf & Bayes \\
		\midrule
		Narrow & Early ($n=3$) & Annul & under & $+$0.000 & $-$0.165 & $+$0.000 & $-$0.099 \\
		Narrow & Early ($n=3$) & Annul & neutr & $+$0.000 & $-$0.028 & $+$0.000 & $-$0.017 \\
		Narrow & Early ($n=3$) & Annul & overp & $+$0.000 & $+$0.071 & $+$0.000 & $+$0.043 \\
		Narrow & Mid$<$50\% ($n=4$) & Annul & under & $-$0.000 & $-$0.160 & $+$0.000 & $-$0.080 \\
		Narrow & Mid$<$50\% ($n=4$) & Annul & neutr & $-$0.000 & $-$0.029 & $+$0.000 & $-$0.015 \\
		Narrow & Mid$<$50\% ($n=4$) & Annul & overp & $-$0.000 & $+$0.073 & $+$0.000 & $+$0.037 \\
		Narrow & Mid$\ge$50\% ($n=5$) & Forfeit & under & $+$0.414 & $-$0.157 & $+$0.000 & $-$0.067 \\
		Narrow & Mid$\ge$50\% ($n=5$) & Forfeit & neutr & $+$0.536 & $-$0.035 & $+$0.000 & $-$0.015 \\
		Narrow & Mid$\ge$50\% ($n=5$) & Forfeit & overp & $+$0.643 & $+$0.073 & $+$0.000 & $+$0.031 \\
		Wide & Early ($n=3$) & Annul & under & $+$0.000 & $-$0.191 & $+$0.000 & $-$0.115 \\
		Wide & Early ($n=3$) & Annul & neutr & $+$0.000 & $-$0.053 & $+$0.000 & $-$0.032 \\
		Wide & Early ($n=3$) & Annul & overp & $+$0.000 & $+$0.049 & $+$0.000 & $+$0.030 \\
		Wide & Mid$<$50\% ($n=4$) & Annul & under & $-$0.000 & $-$0.197 & $+$0.000 & $-$0.099 \\
		Wide & Mid$<$50\% ($n=4$) & Annul & neutr & $-$0.000 & $-$0.055 & $+$0.000 & $-$0.028 \\
		Wide & Mid$<$50\% ($n=4$) & Annul & overp & $-$0.000 & $+$0.048 & $+$0.000 & $+$0.024 \\
		Wide & Mid$\ge$50\% ($n=5$) & Forfeit & under & $+$0.493 & $-$0.194 & $+$0.000 & $-$0.083 \\
		Wide & Mid$\ge$50\% ($n=5$) & Forfeit & neutr & $+$0.632 & $-$0.055 & $+$0.000 & $-$0.024 \\
		Wide & Mid$\ge$50\% ($n=5$) & Forfeit & overp & $+$0.736 & $+$0.049 & $+$0.000 & $+$0.021 \\
		\bottomrule
	\end{tabular}
\end{table}

The bias results reveal the fundamental tradeoffs among methods. Pure performance-based imputation and FIDE annulment achieve exactly zero bias---both are unbiased estimators of the opponent's scoring rate by construction. However, this comes at the cost of high variance, yielding poor RMSE. FIDE forfeit exhibits massive positive bias ($+$0.41 to $+$0.74), systematically overcompensating opponents. Pure Elo shows form-dependent bias: negative when the player underperforms (Elo overestimates their strength), positive when overperforming. Bayesian imputation achieves near-zero bias while maintaining substantially lower variance than the unbiased alternatives, yielding the best overall RMSE in underperforming scenarios.

\subsection{FIDE Rule Comparison}

Beyond the case for Bayesian BLUP imputation, our simulations address a practical policy question: 
if FIDE were to adopt a simple, uniform rule rather than the current dichotomous system, should it be annulment or forfeit? Table~\ref{tab:fide_comparison} compares both FIDE rules against Bayesian imputation across all scenarios, regardless of which rule currently applies.

\begin{table}[htbp] 
	\centering
	\caption{RMSE Comparison (Lower is better): All Methods Across All Scenarios}
	\label{tab:fide_comparison}
	\begin{tabular}{@{}llcccl@{}}
		\toprule
		Field & Form & Annul & Forfeit & Bayes & Best \\
		\midrule
		\multicolumn{5}{l}{\textit{Early withdrawal ($n=3$): FIDE decision: Annul}} \\
		Narrow & Under & \textbf{0.478} & 0.563 & 0.417 & Bayes \\
		Narrow & Neutral & \textbf{0.458} & 0.659 & 0.389 & Bayes \\
		Narrow & Over & \textbf{0.484} & 0.753 & 0.413 & Bayes \\
		Wide & Under & \textbf{0.465} & 0.625 & 0.408 & Bayes \\
		Wide & Neutral & \textbf{0.480} & 0.745 & 0.401 & Bayes \\
		Wide & Over & \textbf{0.464} & 0.826 & 0.388 & Bayes \\
		\midrule
		\multicolumn{5}{l}{\textit{Mid-below-50\% ($n=4$): FIDE decision: Annul}} \\
		Narrow & Under & \textbf{0.450} & 0.566 & 0.413 & Bayes \\
		Narrow & Neutral & \textbf{0.433} & 0.659 & 0.390 & Bayes \\
		Narrow & Over & \textbf{0.455} & 0.754 & 0.411 & Bayes \\
		Wide & Under & \textbf{0.442} & 0.621 & 0.406 & Bayes \\
		Wide & Neutral & \textbf{0.452} & 0.743 & 0.403 & Bayes \\
		Wide & Over & \textbf{0.438} & 0.826 & 0.389 & Bayes \\
		\midrule
		\multicolumn{5}{l}{\textit{Mid-above-50\% ($n=5$): FIDE decision: Forfeit}} \\
		Narrow & Under & 0.439 & \textbf{0.569} & 0.413 & Bayes \\
		Narrow & Neutral & 0.419 & \textbf{0.654} & 0.390 & Bayes \\
		Narrow & Over & 0.439 & \textbf{0.754} & 0.410 & Bayes \\
		Wide & Under & 0.426 & \textbf{0.623} & 0.401 & Bayes \\
		Wide & Neutral & 0.438 & \textbf{0.744} & 0.402 & Bayes \\
		Wide & Over & 0.424 & \textbf{0.827} & 0.389 & Bayes \\
		\bottomrule
	\end{tabular}
	
	\medskip
	\small
	Note: Bold = FIDE's current rule for that scenario. 
\end{table}
Strikingly, \textbf{Bayesian imputation achieves the lowest RMSE in all 18 scenarios} when compared against both FIDE rules. FIDE annulment consistently outperforms FIDE forfeit, often by substantial margins (e.g., for narrow-field neutral scenarios, annulment RMSE is 0.433 while forfeit RMSE is 0.659 at $n=4$, and 0.419 vs 0.654 at $n=5$). This suggests that \textbf{if FIDE were to adopt a uniform rule, annulment would be preferable to forfeit}---but Bayesian imputation outperforms both.

The 50\% threshold creates an arbitrary discontinuity: a player withdrawing after game 4 triggers annulment, while withdrawal after game 5 triggers forfeit. At $n=5$, the forfeit rule yields RMSE values 50--60\% higher than what annulment would achieve in the same scenario (e.g., narrow neutral: 0.654 vs 0.419). Bayesian imputation eliminates this discontinuity by providing a principled, continuous approach that adapts smoothly to the amount of available data.

\subsection{Summary of Simulation Results}
Table~\ref{tab:loocv_summary} presents overall performance averaged across all scenarios.
\begin{table}[htbp]
	\centering
	\caption{Overall LOOCV Performance and Bayesian Improvement}
	\label{tab:loocv_summary}
	\begin{tabular}{@{}lcc@{}}
		\toprule
		\multicolumn{3}{l}{\textbf{Panel A: Overall Performance}} \\
		\midrule
		Method & RMSE & Bias \\
		\midrule
		FIDE (actual rule applied) & 0.537 & $+$0.192 \\
		Pure Elo & 0.398 & $-$0.053 \\
		Pure Performance & 0.449 & $+$0.000 \\
		Bayesian ($k=3$) & 0.402 & $-$0.027 \\
		\midrule
		\multicolumn{3}{l}{\textbf{Panel B: Bayesian Improvement by FIDE Rule}} \\
		\midrule
		FIDE Rule & FIDE RMSE & Bayes Improvement \\
		\midrule
		Annulment (12 scenarios) & 0.458 & $+12.2\%$ \\
		Forfeit (6 scenarios) & 0.695 & $+42.8\%$ \\
		\midrule
		\multicolumn{3}{l}{\textbf{Panel C: Bayesian Improvement by Player Form}} \\
		\midrule
		Form Scenario & vs FIDE & vs Elo \\
		\midrule
		Underperforming & $+19.5\%$ & $+3.2\%$ \\
		Neutral & $+28.0\%$ & $-3.3\%$ \\
		Overperforming & $+32.3\%$ & $-3.0\%$ \\
		\bottomrule
	\end{tabular}
\end{table}

Several conclusions emerge from the simulation study:

\begin{enumerate}
	\item \textbf{Bayesian imputation outperforms FIDE rules by 12--43\%.} The improvement is largest for forfeit scenarios ($+42.8\%$), where FIDE's award of a full point massively overcompensates opponents, while for annulment scenarios the improvement is $+12.2\%$.
	
	\item \textbf{Bayesian wins all underperforming scenarios.} When the withdrawn player is struggling---the most policy-relevant case---Bayesian achieves $3.2\%$ lower RMSE than pure Elo and $19.5\%$ lower than FIDE.
	
	\item \textbf{Bayesian remains competitive when Elo wins.} In neutral and overperforming scenarios, pure Elo achieves the lowest RMSE, but Bayesian trails by only $3.0$--$3.3\%$. This asymmetry reflects the value of incorporating observed performance when form deviates from rating. Moreover, pure Elo imputation would likely face resistance from players on fairness grounds that it completely ignores actual tournament results, awarding imputed scores based solely on historical ratings.
	
	\item \textbf{Bayesian provides robustness.} Unlike pure Elo (which fails when form deviates) or pure performance (which has high variance), Bayesian imputation performs well across all scenarios by optimally balancing prior information against observed data.
	
	\item \textbf{FIDE annulment dominates FIDE forfeit.} Across all scenarios, annulment achieves lower RMSE than forfeit, often substantially (e.g., in narrow neutral scenarios, annulment RMSE is 0.433 vs forfeit 0.659). The 50\% threshold appears arbitrary from a statistical perspective.
	
\end{enumerate}

The simulation results validate the theoretical framework developed in Section~\ref{sec:theory}. Bayesian BLUP imputation achieves the best bias-variance tradeoff, providing robust predictive accuracy regardless of the withdrawn player's form or the timing of withdrawal. For practical implementation, these results support adopting Bayesian imputation with $k=3$ as a replacement for FIDE's current threshold-based system.

	\section{Implementation and Software}
	
	To facilitate adoption of Bayesian BLUP imputation by tournament organizers, we have developed an open-source R Shiny application that automates all calculations and provides transparent documentation. The application accepts as input the player field with starting Elo ratings, the withdrawn player's results prior to withdrawal, and the prior strength parameter $k$. It outputs imputed scores under all methods discussed in this paper, enabling side-by-side comparison of their impacts on standings.
	
	The implementation includes visualization tools showing how imputed scores vary with the prior strength parameter, allowing organizers to conduct sensitivity analysis as recommended in Bayesian practice \citep{gelman2013}. Posterior distributions are displayed graphically, communicating uncertainty in the imputed values. Results can be exported in multiple formats including PDF reports, CSV data files, and LaTeX tables suitable for inclusion in official tournament documentation. Built-in help text explains the mathematical basis for each method, supporting transparency in communication with players and stakeholders.
	
	The core imputation calculation can be implemented in a few lines of code in any statistical software. In R, for example:
	
	\begin{verbatim}
		impute_score <- function(n, P, R_W, R_i, E_bar, k = 3) {
			w <- n / (n + k)                         # data weight
			s_bar_W <- P / n                          # observed scoring rate
			E_i <- 1 / (1 + 10^((R_W - R_i) / 400)) # Elo expectation
			I <- E_i + w * (1 - s_bar_W - E_bar)    # BLUP imputation
			return(I)
		}
	\end{verbatim}

The application and complete source code are available at 
\href{https://dobbs-onc-jhmi.shinyapps.io/Chess_Mid-Tournament_Withdrawal/}{this link} 
under an open-source license permitting free use and modification.
	
\section{Recommendations for Tournament Regulations}

The analysis presented in this paper supports several recommendations for tournament organizers seeking to handle withdrawals fairly and transparently. These recommendations are consistent with the general principles outlined in the FIDE Handbook \citep{fide_grc} while providing more specific guidance on imputation methodology.

First and most importantly, regulations governing withdrawal should be specified before the tournament begins. Players and other stakeholders should know in advance how unplayed games will be treated, eliminating any perception that rules are being crafted after the fact to favor particular outcomes. The regulatory clause might read as follows: Should a player withdraw after commencing play, unplayed games involving that player shall be scored using the Bayesian BLUP (Best Linear Unbiased Predictor) imputation with prior strength parameter \(k = 3\) for classical time controls. The imputation formula is
\[
I_{i,W} = E_{i,W} + \frac{n}{n+k}\bigl(1 - \bar{s}_W - \bar{E}\bigr),
\]
where \(E_{i,W} = \frac{1}{1 + 10^{(R_W - R_i)/400}}\) is the Elo‑expected score of opponent \(i\) against \(W\), \(n\) is the number of games played by the withdrawn player, \(\bar{s}_W\) is his/her average score, and \(\bar{E}\) is the average Elo expectation of the opponents actually faced by \(W\) (computed from their ratings). Played results stand at their actual values.

Second, tiebreak calculations require adjustment when a withdrawal has occurred. Sonneborn-Berger scores should be recalculated using only games among active players, excluding results against the withdrawn player. Direct encounter tiebreaks are not applicable when one of the tied players never faced the other due to withdrawal. Performance rating calculations for tiebreak purposes should similarly exclude games against the withdrawn player.

Third, organizers should distinguish clearly between tournament standings and rating calculations. FIDE rating rules \citep{fide_grc} handle withdrawals through a separate mechanism: unplayed games simply do not enter the rating calculation, while played games count normally regardless of subsequent withdrawal. The imputation methods discussed in this paper apply to tournament standings and prize distribution, not to rating adjustments.

Fourth, transparency in implementation builds trust among participants. Publishing the calculations underlying imputed scores, whether through the software application we provide (which implements the Bayesian BLUP) or through equivalent documentation, allows players to verify that rules have been applied correctly and consistently.
	
\section{High-Stakes Applications: The Candidates Tournament}
\label{sec:candidates}

While the Bucharest 2026 case study illustrates our methodology in a prestigious event, 
the stakes there---prize money and tour standings---pale in comparison to the Candidates 
tournament, which determines the World Championship challenger. A mid-tournament withdrawal 
in such an event would create unprecedented controversy if handled through arbitrary methods.

The Candidates features eight players in a double round-robin format (14 games each). The 
winner earns the right to challenge for the World Championship---a prize whose prestige 
and historical significance far exceed any other achievement in competitive chess. Second 
place confers no qualification rights and is largely forgotten. The double round-robin 
format complicates withdrawal: depending on timing, each remaining player may have faced 
the withdrawn player twice, once, or not at all, requiring imputation of zero, one, or 
two games respectively.

While no Candidates has yet experienced a mid-tournament withdrawal, historical episodes 
illustrate the potential for disruption. The COVID-19 pandemic forced suspension of the 
2020 Candidates midway through, with play resuming over a year later. Had any player been 
unable to return, organizers would have faced exactly the imputation dilemma we address. 
Health emergencies, family crises, and geopolitical disruptions represent ongoing risks; 
as chess becomes increasingly globalized, the probability of a withdrawal scenario 
approaches certainty over a sufficiently long time horizon.

\subsection*{Why Traditional Methods Fail}

In the Candidates context, the failures of traditional methods become catastrophic. If the 
tournament leader withdraws after 10 of 14 rounds, forfeit rules would award a full point 
to every player who had not yet faced them in the second cycle---potentially elevating a 
middle-of-the-pack finisher to challenger through administrative windfall rather than 
competitive achievement. Under full annulment, players who defeated the leader in 
hard-fought games would see those victories erased, receiving no credit for demonstrated 
superiority. The World Championship cycle derives legitimacy from the perception that the challenger 
earned their position through excellence. A challenger who reached the title match partly 
through forfeit points or annulled results would face questions about the legitimacy of 
their qualification.

\subsection*{Bayesian Imputation as Safeguard}

Bayesian imputation preserves competitive legitimacy by awarding imputed scores that 
reflect what would likely have happened in actual play. A strong leader's performance 
yields low imputed scores for unplayed opponents, maintaining relative positions 
established through actual play. A struggling player's withdrawal yields higher imputed 
scores, appropriately crediting opponents.

The transparency of the method is particularly valuable in high-stakes contexts. Every 
imputed score can be fully explained: the withdrawn player's performance, the rating-based 
expectation, the weighting, and the result. Players, officials, sponsors, and fans can 
verify calculations independently. There is no hidden discretion, no opportunity for 
favoritism, and no basis for allegations of manipulation.

Adopting Bayesian imputation prophylactically---writing it into Candidates regulations 
before any withdrawal occurs---eliminates perceptions that rules are crafted to benefit 
particular players. The method becomes part of the institutional infrastructure of fair 
competition rather than an ad hoc response to crisis.

\section{Discussion}
\label{sec:discussion}

Several aspects of the methodology merit further discussion and suggest directions for future work.

\subsection*{Presentation of Imputed Scores}

A practical concern with Bayesian imputation is that it produces fractional scores such as 
0.689 or 0.633, which may appear unfamiliar to players and fans accustomed to traditional 
chess scoring (1 for a win, ½ for a draw, 0 for a loss). We consider three approaches to 
the computation-versus-communication dilemma:

\begin{itemize}
	\item \textbf{Report only ``actually earned'' scores with footnoted tiebreaks.} 
	Tournament crosstables would display traditional notation for played games while 
	marking unplayed games with a distinct symbol (e.g., ``--''). Imputed values would 
	appear only in supplementary tiebreak documentation. However, this approach creates 
	confusion when players with identical reported scores finish in different positions, 
	requiring audiences to consult footnotes to understand the final rankings.
	
	\item \textbf{Round imputed scores to the nearest half-point.} This maintains 
	traditional chess score granularity (0, ½, 1) but sacrifices the differentiation 
	that Bayesian imputation provides. Rounding 0.689 and 0.633 both to 0.5 erases the 
	rating-based distinction between opponents. Worse, rounding can create artificial 
	ties.
	
	\item \textbf{Report fully imputed totals rounded to one decimal.} A player's 
	score would read ``5.7/9'' rather than ``5.689/9,'' preserving ranking accuracy 
	while avoiding false precision. This parallels reporting conventions in other 
	sports (batting averages, quarterback ratings) where fractional performance 
	measures are standard.
\end{itemize}

We recommend option (iii) for most contexts. One-decimal reporting is transparent, 
unambiguous, and preserves the statistical benefits of Bayesian imputation. Tournament 
crosstables could continue displaying traditional notation (1, ½, 0) for individual 
played games while marking imputed results with a symbol and explanatory footnote; 
only the cumulative score column would show the fractional total.

The key principle is that \emph{computation} and \emph{communication} are separable 
concerns. Bayesian imputation provides the most accurate estimate of what would have 
occurred; reporting conventions can be adapted to context without compromising the 
underlying methodology.

\subsection*{Model Limitations}
Our model treats the opponent’s score as a continuous outcome with normal errors. This is an approximation because game outcomes are discrete (win/draw/loss). However, the normal linear model yields a closed‑form BLUP, is easily communicated, and performs well in simulations. More sophisticated approaches (e.g., logistic mixed models for binary outcomes or multinomial models that can also accommodate draws) would not substantially improve predictive accuracy given the small number of observed games per withdrawal, and would add computational complexity unnecessary for practical tournament use. Thus, we retain the linear model for its transparency and optimality under the assumed second‑moment structure. For similar reasons, we do not pursue more elaborate state‑space models for evolving player strength; the simplicity of the proposed method is a key advantage for FIDE and the tournament organizers.

\subsection*{Calibration of Prior Strength}

The prior strength parameter $k=3$ was chosen based on theoretical considerations and sensitivity analysis. Empirical calibration using historical withdrawal data would strengthen this foundation. A database of past withdrawals, recording the withdrawn player's rating, observed results, and subsequent tournament performances, could inform maximum likelihood or hierarchical Bayesian estimation of the optimal $k$ value. Such analysis might reveal that $k$ should vary with tournament format, time control, or rating level.

\subsection*{Extensions to Other Contexts}

The missing data perspective developed here connects tournament withdrawal to a broader statistical literature on incomplete observations. Other missingness mechanisms in sports contexts---canceled matches due to weather, disqualifications, scheduling conflicts---may benefit from similar principled treatment. The Bayesian shrinkage framework provides a flexible tool for combining prior information with observed data in any setting where both are available.

Our framework also has implications for Swiss-system tournaments, where tiebreaks based on opponent performance (e.g., Buchholz or Sonneborn-Berger scores) can be distorted by mid-tournament withdrawals. When a player withdraws, their final score is deflated by forfeits, reducing the tiebreak scores of earlier opponents through no fault of their own. Imputing scores for the withdrawn player's unplayed games---rather than recording forfeits---would preserve tiebreak integrity.

\subsection*{The Uniform Annulment Alternative}

A striking finding from our simulation study is that FIDE annulment uniformly outperforms FIDE forfeit across all 18 scenarios, often by substantial margins (15--45\% RMSE reduction). This result has immediate policy implications independent of whether Bayesian imputation is adopted.

The current dichotomous rule---annulment before 50\% of games, forfeit thereafter---creates arbitrary discontinuities. In a 9-round tournament, withdrawal after game 4 triggers annulment while withdrawal after game 5 triggers forfeit, despite the minimal difference in available information. This threshold effect means that tournament outcomes can depend on the accident of timing rather than competitive merit.

Even if FIDE were unwilling to adopt the Bayesian imputation methodology due to perceived complexity, replacing the dichotomous rule with uniform annulment would represent a significant improvement. Annulment is simple to implement (simply remove all games involving the withdrawn player and recalculate standings), requires no mathematical computation, and---as our simulations demonstrate---consistently achieves lower prediction error than forfeit. We therefore propose uniform annulment as a minimum reform if Bayesian imputation is not adopted.

\section{Conclusion}
\label{sec:conclusion}

Mid-tournament withdrawal in round-robin chess creates a missing data problem that current FIDE rules handle poorly. The dichotomous 50\% threshold---annulment before, forfeit after---introduces arbitrary discontinuities that can determine outcomes through administrative accident rather than competitive merit.

This paper develops a principled alternative: the Bayesian BLUP (Best Linear Unbiased Predictor) derived from a random effects model. The estimator combines Elo-based prior expectations with observed tournament performance, weighting each according to reliability. It is consistent, preserves point conservation, and among linear unbiased predictors it minimizes mean squared prediction error. Monte Carlo validation using 180,000 simulated tournaments confirms that Bayesian BLUP imputation outperforms FIDE forfeit by 42.8\% and FIDE annulment by 12.2\% in RMSE, with the largest gains in underperforming-player scenarios---the most common withdrawal context.

Three policy recommendations emerge:

\begin{enumerate}
	\item \textbf{Preferred:} Adopt Bayesian BLUP imputation with $k=3$ for all World Championship cycle events and elite round-robins. Open-source software is provided for immediate implementation.
	
	\item \textbf{Minimum reform:} Replace the dichotomous 50\% rule with uniform annulment. Our simulations show annulment consistently outperforms forfeit by 15--45\%, and a uniform rule eliminates arbitrary threshold effects.
	
	\item \textbf{Transparency:} Specify withdrawal procedures in tournament regulations before play begins. Imputed scores need only determine rankings; traditional notation can be preserved in crosstables and public reporting.
\end{enumerate}

The Firouzja withdrawal at Bucharest 2026 illustrates the stakes. Bayesian BLUP imputation would have awarded unplayed opponents 0.55--0.70 points rather than 1.0, preserving the competitive advantage of players who actually defeated him. In Candidates tournaments, where World Championship qualification hangs in the balance, principled imputation is not merely a technical refinement but a safeguard of competitive legitimacy.

As elite round-robins grow in prestige and financial stakes, statistically principled methods for handling withdrawals represent a commitment to the fairness that underlies tournament chess. The methodology developed here provides a rigorous, practical, and transparent solution.

		\section*{Acknowledgments}
	
The R Shiny application is available at \url{https://dobbs-onc-jhmi.shinyapps.io/Chess_Mid-Tournament_Withdrawal/} under an MIT open-source license.

\bibliographystyle{plainnat}  
\bibliography{references}

@article{efron1977,
  author  = {Efron, Bradley and Morris, Carl},
  title   = {Stein's Paradox in Statistics},
  journal = {Scientific American},
  year    = {1977},
  volume  = {236},
  number  = {5},
  pages   = {119--127}
}

@book{elo1978,
  author    = {Elo, Arpad E.},
  title     = {The Rating of Chessplayers, Past and Present},
  publisher = {Arco Publishing},
  address   = {New York},
  year      = {1978}
}

@misc{fide_grc,
  author       = {{FIDE}},
  title        = {General Regulations for Competitions, Article 6.6},
  year         = {2018},
  note         = {Approved by the 1986 General Assembly; last amended by the 2018 General Assembly. Available at \url{https://handbook.fide.com} under C.05}
}

@book{gelman2013,
  author    = {Gelman, Andrew and Carlin, John B. and Stern, Hal S. and Dunson, David B. and Vehtari, Aki and Rubin, Donald B.},
  title     = {Bayesian Data Analysis},
  edition   = {3rd},
  publisher = {Chapman and Hall/CRC},
  address   = {Boca Raton, FL},
  year      = {2013}
}

@article{glickman1999,
  author  = {Glickman, Mark E.},
  title   = {Parameter Estimation in Large Dynamic Paired Comparison Experiments},
  journal = {Journal of the Royal Statistical Society: Series C (Applied Statistics)},
  year    = {1999},
  volume  = {48},
  number  = {3},
  pages   = {377--394}
}

@article{glickman2025,
  author  = {Glickman, Mark E. and Jones, A. C.},
  title   = {Models and Rating Systems for Head-to-Head Competition},
  journal = {Annual Review of Statistics and Its Application},
  year    = {2025},
  volume  = {12},
  pages   = {259--282}
}

@book{little2019,
  author    = {Little, Roderick J. A. and Rubin, Donald B.},
  title     = {Statistical Analysis with Missing Data},
  edition   = {3rd},
  publisher = {John Wiley \& Sons},
  address   = {Hoboken, NJ},
  year      = {2019}
}

@article{robinson1991,
  author  = {Robinson, G. K.},
  title   = {That {BLUP} is a Good Thing: The Estimation of Random Effects},
  journal = {Statistical Science},
  year    = {1991},
  volume  = {6},
  number  = {1},
  pages   = {15--32}
}
		
\end{document}